\crefname{table}{table}{tables}
\Crefname{table}{Table}{Tables}
\crefname{figure}{figure}{figures}
\Crefname{figure}{Figure}{Figures}
\crefname{section}{section}{sections}
\Crefname{section}{Section}{Sections}
\crefname{claim}{claim}{claims}
\Crefname{claim}{Claim}{Claims}
\Crefname{algorithm}{Algorithm}{Algorithms}
\spnewtheorem*{theorem*}{Theorem}{\bfseries}{\rmfamily}
\spnewtheorem{claim}{Claim}{\itshape}{}
\newcommand{\Hhillrlx}{\mathbf{H}^{\mathsf{HILL-rlx}}}
\newcommand{\D}{\textsc{D}}
\newcommand{\cD}{\mathcal{D}}
\newcommand{\F}{\textsc{F}}
\newcommand{\SD}{\mathrm{SD}}
\newcommand{\eqd}{\overset{d}{=}}
\newcommand{\nstrings}{\{0,1\}^n}
\newcommand{\mstrings}{\{0,1\}^\ell}
\newcommand{\bigO}[1]{O\left(#1\right)}
\DeclareMathOperator*{\E}{\mathbb{ E }}
\newcommand{\poly}[1]{\mathrm{poly}\left(#1\right)}
\newcommand{\NM}{\textsf{NegativeMass}}
\title{Simulating Auxiliary Inputs, Revisited}
\author{Maciej Skorski 
\thanks{This work was partly supported by the WELCOME/2010-4/2 grant founded within the framework of the EU Innovative Economy Operational Programme.} }
\institute{
\email{maciej.skorski@mimuw.edu.pl} \\ Cryptology and Data Security Group, University of Warsaw}
\begin{document}

\maketitle

\begin{abstract}
For any pair $(X,Z)$ of correlated random variables we can think of $Z$ as a randomized function of $X$. Provided that $Z$ is short, one can make this function computationally efficient
by allowing it to be only approximately correct. In folklore this problem is known as \emph{simulating auxiliary inputs}.
This idea of simulating auxiliary information turns out to be a powerful tool in computer science, finding applications in complexity theory, cryptography, pseudorandomness and zero-knowledge. In this paper we revisit this problem, achieving the following results:

\begin{enumerate}[(a)]
\item We discuss and compare efficiency of known results, finding the flaw in the best known bound claimed in the TCC'14 paper "How to Fake Auxiliary Inputs".
\item We present a novel boosting algorithm for constructing the simulator. Our technique essentially fixes the flaw. This boosting proof is of independent interest, as it shows how to handle "negative mass" issues when constructing probability measures in descent algorithms.
\item Our bounds are much better than bounds known so far. To make the simulator $(s,\epsilon)$-indistinguishable we need the complexity $O\left(s\cdot 2^{5\ell}\epsilon^{-2}\right)$ in time/circuit size, which is better by a factor $\epsilon^{-2}$ compared to previous bounds.
In particular, with our technique we (finally) get meaningful provable security for the  EUROCRYPT'09 leakage-resilient stream cipher instantiated with a standard 256-bit block cipher, like $\mathsf{AES256}$.
\end{enumerate}

Our boosting technique utilizes a two-step approach. In the first step we shift the current result (as in gradient or sub-gradient descent algorithms) and in the separate step we fix the biggest non-negative mass constraint violation (if applicable). 

\end{abstract}

\smallskip
\keywords{simulating auxiliary inputs, boosting, leakage-resilient cryptography, stream ciphers, computational indistinguishability}

\section{Introduction}

\subsection{Simulating Correlated Information.}
\subsubsection{Informal Problem Statement}
 Let $(X,Z)\in\mathcal{X}\times\mathcal{Z}$ be a pair of correlated random variables. We can think of $Z$ as a \emph{randomized} function of $Z$. More precisely, consider the randomized function $h:\mathcal{X}\rightarrow \mathcal{Z}$, which for every $x$ outputs $z$ with probability $\Pr[Z=z|X=x]$. By definition it satisfies
\begin{align}\label{eq:simulating_nonefficient}
 (X,h(X)) \overset{d}{=} (X,Z)
\end{align}
however the function $h$ is \emph{inefficient} as we need to hardcode the conditional probability table of $Z|X$. It is natural to ask, if this limitation can be overcome
\begin{quote}
\textbf{Q1}: Can we represent $Z$ as an \emph{efficient} function of $X$?
\end{quote}
Not surprisingly, it turns out that a positive answer may be given only in computational settings.
Note that replacing the equality in \Cref{eq:simulating_nonefficient} by closeness in the total variation distance (allowing the function $h$ to make some mistakes with small probability) is not enough \footnote{Indeed, consider the simplest case $\mathcal{Z} = \{0,1\}$, define $X$ to be uniform over $\mathcal{X}=\{0,1\}^n$, and take $Z=f(X)$ where $f$ is a function which is $0.5$-hard to predict by circuits exponential in $n$, 
Then $(X,h(X))$ and $(X,Z)$ are at least $\frac{1}{4}$-away in total variation}. This discussion leads to the following reformulated question
\begin{quote}
\textbf{Q1'}: Can we \emph{efficiently simulate} $Z$ as a function of $X$?
\end{quote}

\subsubsection{Why it matters?}
Aside from being very foundational, this question is relevant to many areas of computer science. We will not discuss these applications in detail, as they are well explained in \cite{JetchevP14}. Below we only mention where such a generic simulator can be applied, to show that this problem is indeed well-motivated.
\begin{enumerate}[(a)]
\item Complexity Theory. From the simulator one can derive Dense Model Theorem \cite{Reingold2008},
Impagliazzo's hardcore lemma \cite{Impagliazzo95} and a version of Sz´emeredi’s Regularity Lemma \cite{FriezeK99}.
\item Cryptography. The simulator can be applied for settings where $Z$ models short leakage from a secret state $X$. It provides tools for improving and simplifying proofs in leakage-resilient cryptography, in particular for leakage-resilient stream ciphers \cite{JetchevP14}. 
\item Pseudorandomness. Using the simulator one can conclude results called chain rules \cite{GentryWichs2010}, which quantify pseudorandomness in conditioned distributions. They can be also applied to leakage-resilient cryptography.
\item Zero-knowledge. The simulator can be applied to represent the text exchanged in verifier-prover interactions $Z$ from the common input $X$ \cite{Chung2015}. 
\end{enumerate}
Thus, the simulator may be used as a tool to unify, simplify and improve many results.
Having briefly explained the motivation we now turn to answer the posed question, leaving a more detailed discussion of some applications to \Cref{subsec:Applications}.

\subsection{Problem Statement}
The problem of simulating auxiliary inputs in the computational setting can be defined precisely as follows
\begin{quote}
Given a random variables $X\in\{0,1\}^n$ and correlated $Z\in\{0,1\}^{\ell}$, what is the minimal complexity $s_h$ of a (randomized) function $h$ such that the distributions of ${h(X)}$ and $Z$ are $(\epsilon,s)$-indistinguishable given $X$, that is 
$$|\E\D(X,{h(X)})-\E\D(X,Z) | < \epsilon$$ holds for all (deterministic) circuits $\D$ of size $s$?
\end{quote}
The indistinguishability above is understood with respect to deterministic circuits. However it doesn't really matter for distinguishing two distributions, where randomized and deterministic distinguishers are equally powerful\footnote{If two distributions can be distinguished by a randomized circuit, we can fix a specific choice of coins to achieve at least the same advantage}.

It turns out that it is relatively easy\footnote{We briefly sketch the idea of the proof: note first that it is easy to construct a simulator for every single distinguisher. Having realized that, we can use the min-max theorem to switch the quantifiers and get one simulator for all distinguishers.} to construct a simulator $h$ with a polynomial blowup in complexity, that is 
when 
$$s_h =\poly{ s,\epsilon^{-1}, 2^{\ell} }.$$
However, more challenging is to minimize the dependency on $\epsilon^{-1}$. This problem is especially important for cryptography, where security definitions require the advantage $\epsilon$ to be possibly small. Indeed, for meaningful security $\epsilon=2^{-80}$ or at least $\epsilon = 2^{-40}$ it makes a difference whether we lose $\epsilon^{-2}$ or $\epsilon^{-4}$.
We will see later how much inefficient bounds here may affect provable security of stream ciphers.


\subsection{Related Works}

\subsubsection{Original work of Jetchev and Pietrzak (TCC'14)} The authors showed that $Z$ can be ``{approximately}'' computed from $X$ by an ``{efficient}'' function $\mathsf{h}$. 
\begin{theorem}[\cite{JetchevP14}, corrected]\label{thm:Simulator_Pietrzak}
For every distribution $(X,Z)$ on $\{0,1\}^n\times\{0,1\}^{\ell}$ and every $\epsilon$, $s$, there exists a ``simulator'' $h:\{0,1\}^n\rightarrow \{0,1\}^\ell$ such that
\begin{enumerate}[(a)]
\item $(X,{h}(X))$  and $(X,Z)$ are $(\epsilon,s)$-indistinguishable 
\item ${h}$ is of complexity $s_{\mathsf{h}}=\bigO{ s\cdot 2^{4\ell}\epsilon^{-4} } $ 
\end{enumerate}
\end{theorem}
The proof uses the standard min-max theorem. In the statement above we correct two flaws. One is a missing factor of $2^{\ell}$. The second (and more serious) one is the (corrected) factor $\epsilon^{-4}$, claimed incorrectly to be $\epsilon^{-2}$. The flaws are discussed in \Cref{sec:flaws}.
\subsubsection{Vadhan and Zheng (CRYPTO'13)} The authors derived a version of \Cref{thm:Simulator_Pietrzak} but with incomparable bounds
\begin{theorem}[\cite{VadhanZheng2013}]\label{thm:Simulator_Vadhan}
For every distribution $X,Z$ on $\{0,1\}^n\times\{0,1\}^{\ell}$ and every $\epsilon$, $s$, there exists a ``simulator'' $h:\{0,1\}^n\rightarrow \{0,1\}^\ell$ such that
\begin{enumerate}[(a)]
\item $(X,{h}(X))$  and $(X,Z)$ are $(s,\epsilon)$-indistinguishable 
\item ${h}$ is of complexity $s_{\mathsf{h}}=\bigO{ s\cdot 2^{\ell}\epsilon^{-2} + 2^{\ell}\epsilon^{-4}} $ 
\end{enumerate}
\end{theorem}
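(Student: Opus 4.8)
We only sketch the argument. The plan is to construct $\mathsf{h}$ \emph{iteratively}, by a multiplicative-weights (exponential reweighting) process that uses \emph{relative entropy}, rather than an $L_2$-distance, as the progress measure. This single choice is what trades the $2^{3\ell}$ (or $2^{2\ell}$) overhead of the min-max-plus-$L_2$ route for a $2^{\ell}$ one: the simplex of conditional distributions on $\{0,1\}^{\ell}$ has relative-entropy ``diameter'' only $\ell\ln 2$, whereas its $\chi^{2}$/Euclidean diameter is of order $2^{\ell}$. A candidate simulator is a conditional law $h(\cdot\mid x)$ on $\{0,1\}^{\ell}$; we start from the uniform one, $h_{0}(\cdot\mid x)\equiv 2^{-\ell}$, and track the potential
\begin{equation*}
\Phi_{i}\;=\;\E_{x\leftarrow X}\,\mathrm{KL}\!\left((Z\mid X{=}x)\,\big\|\,h_{i}(\cdot\mid x)\right),\qquad\text{so that}\quad 0\;\leqslant\;\Phi_{i}\;\leqslant\;\Phi_{0}\;\leqslant\;\ell\ln 2 .
\end{equation*}

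First I would run the obvious loop. If $(X,h_{i}(X))$ and $(X,Z)$ are already $(\epsilon,s)$-indistinguishable, stop and output $h_{i}$. Otherwise there is a size-$s$ distinguisher $D_{i}$, and after possibly replacing it by $1-D_{i}$ we may assume $\E D_{i}(X,Z)-\E D_{i}(X,h_{i}(X))\geqslant\epsilon$ with $D_{i}$ taking values in $[0,1]$. I then update, separately for each $x$,
\begin{equation*}
h_{i+1}(z\mid x)\;=\;\frac{h_{i}(z\mid x)\,\exp\!\big(\eta\,D_{i}(x,z)\big)}{\sum_{z'\in\{0,1\}^{\ell}}h_{i}(z'\mid x)\,\exp\!\big(\eta\,D_{i}(x,z')\big)},\qquad \eta=\epsilon/2 .
\end{equation*}
The textbook one-line estimate ($\exp(\eta t)\leqslant 1+(e^{\eta}-1)t$ for $t\in[0,1]$, then $\ln(1+u)\leqslant u$, then taking $\E_{x\leftarrow X}$) yields
\begin{equation*}
\Phi_{i+1}-\Phi_{i}\;\leqslant\;-\eta\big(\E D_{i}(X,Z)-\E D_{i}(X,h_{i}(X))\big)+\eta^{2}\;\leqslant\;-\eta\epsilon+\eta^{2}\;=\;-\epsilon^{2}/4 .
\end{equation*}
Since $\Phi_{i}\geqslant 0$ and $\Phi_{0}\leqslant\ell\ln 2$, the loop halts after $T=\bigO{\ell\,\epsilon^{-2}}$ iterations, and the $h_{T}$ it returns satisfies (a) by construction.

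Finally I would bound the complexity of $h_{T}$. To sample $z\sim h_{T}(\cdot\mid x)$ the circuit evaluates the unnormalised weight $\exp\!\big(\eta\sum_{i<T}D_{i}(x,z)\big)$ at every $z\in\{0,1\}^{\ell}$ --- each such evaluation runs the $T$ hard-wired distinguishers, of total size $T\cdot s$ --- and then normalises, for a total of $\bigO{2^{\ell}\cdot T\cdot s}=\widetilde{O}\bigl(s\cdot 2^{\ell}\epsilon^{-2}\bigr)$, the $\mathrm{poly}(\ell)$ from the round count being hidden in $\widetilde{O}(\cdot)$. The remaining additive $\bigO{2^{\ell}\epsilon^{-4}}$ is the price of turning this abstract process into a bona fide bounded circuit: the $D_{i}$'s and the reweighting are defined with respect to the true law of $(X,Z)$, so replacing the exact expectations that drive the loop (and, for a uniform construction, the search for each $D_{i}$ among candidates) by empirical averages costs $\widetilde{O}\bigl(\epsilon^{-2}\bigr)$ samples at each of the $\bigO{\epsilon^{-2}}$ rounds, each sample being processed over the $2^{\ell}$ outcomes. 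I expect the genuinely non-routine points to be two: (i) verifying that entropy regularisation is exactly what buys the $2^{\ell}$ --- i.e.\ that no comparable saving survives for the $L_{2}$ potential, which is the conceptual core --- and (ii) pinning down the precise $2^{\ell}\epsilon^{-4}$ overhead, that is, exactly how much sampling accuracy the circuit needs so that the empirically-driven loop still certifies $(\epsilon,s)$-indistinguishability; everything else is the multiplicative-weights bookkeeping sketched above.
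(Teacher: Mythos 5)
This theorem is cited from Vadhan--Zheng and is not reproved in the paper; the paper only remarks that the Vadhan--Zheng argument rests on their ``uniform min-max theorem'' and that the additive $\bigO{2^{\ell}\epsilon^{-4}}$ term is the cost of a non-trivial boosting algorithm. Your reconstruction matches that description: the KL-potential multiplicative-weights loop is precisely the uniform min-max theorem engine, the $\eta=\epsilon/2$ update and the one-line $\Phi_{i+1}-\Phi_{i}\leqslant -\epsilon^{2}/4$ estimate are correct, and you correctly locate both the $2^{\ell}$ factor (the KL diameter $\ell\ln 2$ versus the $2^{\ell}$-size $L_{2}$ diameter) and the $2^{\ell}\epsilon^{-4}$ additive overhead (empirically driving the loop with $\widetilde O(\epsilon^{-2})$ samples per round over $\bigO{\epsilon^{-2}}$ rounds).

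Two small bookkeeping remarks. First, your round count is $T=\bigO{\ell\,\epsilon^{-2}}$, so the multiplicative term you get is $\bigO{s\cdot 2^{\ell}\ell\,\epsilon^{-2}}$; the extra $\ell$ is swept into $\widetilde O$, whereas the theorem statement writes a clean $\bigO{\cdot}$ --- this is a standard slackness in how these bounds are quoted (the paper itself later allows $\log^{\bigO{1}}(\epsilon^{-1})$ fudge factors), not a genuine gap, but worth flagging. Second, the $\epsilon^{-4}$ term really is where all the work hides: in the nonuniform setting the $D_{i}$'s and the potential are defined against the true law of $(X,Z)$, and replacing exact expectations by empirical ones in a way that still certifies $(\epsilon,s)$-indistinguishability at termination is the ``non-trivial boosting algorithm'' the paper alludes to. Your sketch names this correctly as the open step; the Vadhan--Zheng paper is where it is carried out.
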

The proof follows from a general regularity theorem which is based on their uniform min-max theorem. The additive loss of $\bigO{2^\ell\epsilon^{-4}}$ appears as a consequence of a sophisticated weight-updating procedure. This error is quite large and may dominate the main term for many settings (whenever $s \ll \epsilon^{-2}$). 

As we show later, \Cref{thm:Simulator_Vadhan} and \Cref{thm:Simulator_Pietrzak} give in fact comparable security bounds when applied to leakage-resilient stream ciphers (see \Cref{subsec:Applications})

\subsection{Our Results}

We reduce the dependency of the simulator complexity $s_h$ on the advantage $\epsilon$ to only a factor of $\epsilon^{-2}$, from the factor of $\epsilon^{-4}$.
\begin{theorem}[Our Simulator]\label{thm:Simulator_ours}
For every distribution $X,Z$ on $\{0,1\}^n\times\{0,1\}^{\ell}$ and every $\epsilon$, $s$, there exists a ``simulator'' $h:\{0,1\}^n\rightarrow \{0,1\}^\ell$ such that
\begin{enumerate}[(a)]
\item $(X,{h}(X))$  and $(X,Z)$ are $(s,\epsilon)$-indistinguishable 
\item ${h}$ is of complexity $s_{\mathsf{h}}=\bigO{ s\cdot 2^{5\ell}\epsilon^{-2}}$ 
\end{enumerate}
\end{theorem}
Below in \Cref{table:simulator_comparison} we compare our result to previous works.
\begin{table}[!h]
\centering
\renewcommand{\arraystretch}{1.2}
\resizebox{0.99\linewidth}{!}{
\begin{tabular}{|c|c|c|c|l|}
\hline
Author & Technique & Advantage & Size & Cost of simulating \\
\hline
\cite{JetchevP14} (\Cref{thm:Simulator_Pietrzak}) & Min-Max & \multirow{3}{*}{$\epsilon$} & \multirow{3}{*}{$s$} & $s_{\mathsf{h}} = \bigO{ s\cdot 2^{{4\ell}} \mathbf{\textcolor{red}{\epsilon^{-4}} } }$ \\
\cline{1-2}\cline{5-5}
\cite{VadhanZheng2013} (\Cref{thm:Simulator_Vadhan}) & Complicated Boosting &  &  & $s_{\mathsf{h}} = \bigO{ s\cdot 2^{\ell} /\epsilon^2 + \textcolor{red}{\mathbf{ 2^{\ell}\epsilon^{-4}}}}$ \\
\cline{1-2}\cline{5-5}
\textbf{This paper} (\Cref{thm:Simulator_ours}) & Simple Boosting &  &  & $s_{\mathsf{h}} = \bigO{ s\cdot 2^{5\ell}\textcolor{ForestGreen}{\mathbf{\epsilon^{-2}} }}$ \\
\hline
\end{tabular}
}
\caption{The complexity of simulating $\ell$-bit auxiliary information given  required indistinguishability strength, depending on the proof technique. 
}
\label{table:simulator_comparison}
\end{table}

Our result is slightly worse in terms of dependency $\ell$, but outperforms previous results in terms of dependency on $\epsilon^{-1}$. However, the second dependency is more crucial for cryptographic applications. Note that the typical choice is sub-logarithmic leakage, that is $\ell = o\left(\log\epsilon^{-1}\right)$ is asymptotic settings\footnote{This is a direct consequence of the fact that we want $\ell$ fits poly-preserving reductions} (see for example \cite{Chung2015}).
Stated in non-asymptotic settings this assumption translates to $\ell < c\log \epsilon^{-1}$ where $c$ is a small constant (for example $c= \frac{1}{12}$ see \cite{Pietrzak2009}). In these settings, we outperform previous results. 

To illustrate this, suppose we want to achieve security $\epsilon = 2^{-60}$ simulating just one bit from a $256$-bit input. As it follows from \Cref{table:simulator_comparison}, previous bounds are useless as they give the complexity bigger than $2^{256}$ which is the worst complexity of all boolean functions over the chosen domain. In settings like this, only our bound can be applied to conclude meaningful results. For more concrete examples of settings where our bounds are even only meaningful, we refer to \Cref{table:2} in \Cref{subsec:Applications}.

\subsection{Our Techniques}

Our approach utilizes a simple boosting technique: as long as the condition (a) in \Cref{thm:Simulator_ours} fails, we can use the distinguisher to improve the simulator. This makes our algorithm constructive with respect to oracle answers, similarly to other boosting proofs.  In short, if we find $\D$ such that 
\begin{align*}
\E\D(X,Z) - \E\D(X,h(X)) > \epsilon
\end{align*}
then we construct $h'$ according to the equation\footnote{As we already mentioned, we can assume that $\D$ is deterministic without loss of generality. Then all the terms in the equation are well-defined.} 
\begin{align*}
\Pr[h' (x)= z] = \Pr[h(x)=z]+\gamma\cdot \mathsf{Shift}\left(\D(x,z)\right) + \mathsf{Corr}(x,z)
\end{align*}
where
\begin{enumerate}[(a)]
\item The parameter $\gamma$ is a\emph{fixed step} chosen in advance (its optimal value depends on $\epsilon$ and $\ell$ and is calculated in the proof.)
\item $ \mathsf{Shift}\left(\D(x,z)\right) $ is a \emph{shifted} version of $\D$, so that $\sum_{z} \mathsf{Shift}\left(\D(x,z)\right)  = 0$. This restriction correspond to the fact that we want to preserve the constraint $\sum_{z}h(x,z)=1$.
 More precisely, $\mathsf{Shift}\left(\D(x,z)\right) =\D(x,z)-\E_{z'\leftarrow U_{\ell}}\D(x,z)$
\item $\mathsf{Corr}(x,z)$ is a \emph{correction term} used to fix (some of) possibly negative weights.
\end{enumerate}
The procedure is being repeated in a loop, over and over again. The main technical difficulty is to show that it eventually stops after not so many iterations. 

Note that in every such a step the complexity cost of the shifting term is $O\left( 2^{\ell}\cdot\mathrm{size}(\D) \right)$\footnote{By definition, it requires computing the average of $\D(x,\cdot)$ over $2^{\ell}$ elements}. In our solution, the correction term does a search over $z$ looking for the biggest negative mass, and redistributes it over the remaining points. Intuitively, it works because the total negative mass is getting smaller with every step. See \Cref{alg:simulating} for a pseudo-code description of the algorithm and the rest of  \Cref{sec:proof:Simulator_ours} for a proof.

\subsection{Applications}\label{subsec:Applications}

\subsubsection{Better security for the EUROCRYPT'09 stream cipher.}

The first construction of leakage-resilient stream cipher was proposed by Dziembowski and Pietrzak in~\cite{Dziembowski2008}. 
On \Cref{fig:SC_Pietrzak} below we present a simplified version of this cipher~\cite{Pietrzak2009}, based on a weak pseudorandom function (wPRF). 
\begin{figure}[!th]
\centering
\begin{tikzpicture}
\node[] (k0) at (0,2) {$K_0$};
\node[] (x0) at (0,1) {$x_0$};
\node[] (k1) at (0,0) {$K_1$};
\node[draw,circle] (f0) at (1,2) {$F$};
\node[draw,circle] (f1) at (3,0) {$F$};
\node[draw,circle] (f2) at (5,2) {$F$};
\node[draw,circle] (f3) at (7,0) {$F$};
\node[circle, text opacity = 0] (f4) at (8,2) {$F$};
\node[circle, text opacity = 0] (f41) at (8,1) {$F$};
\node[circle, text opacity = 0] (f5) at (9,0) {$F$};
\draw[-latex] (k0) -- (f0);
\draw[-latex] (x0) -- (f0);
\draw[-latex] (k1) -- (f1);
\draw[-latex] (f0) -- (f1) node[midway, above right] {$x_1$};
\draw[-latex] (f1) -- (f3) node[midway, above right] {$K_3$};
\draw[-latex] (f1) -- (f2) node[midway, above left] {$x_2$};
\draw[-latex] (f0) -- (f2) node[midway, above] {$K_2$};
\draw[-latex] (f2) -- (f3) node[midway, above right] {$x_3$};
\draw[dotted] (f2) -- (f4) node[midway, above right] {$K_4$};
\draw[dotted] (f3) -- (f5) node[midway, above right] {$K_5$};
\draw[dotted] (f3) -- (f41) node[above left] {$x_5$};
\node[color=gray, above = 1 of f0] (l0) {$L_0$};
\draw[color=gray, decorate, decoration=snake] (f0) -- (l0);
\node[color=gray, above= 1 of f2] (l2) {$L_2$};
\draw[color=gray, decorate, decoration=snake] (f2) -- (l2);
\node[color=gray, below= 1 of f1] (l1) {$L_1$};
\draw[color=gray, decorate, decoration=snake] (f1) -- (l1);
\node[color=gray, below= 1 of f3] (l3) {$L_3$};
\draw[color=gray, decorate, decoration=snake] (f3) -- (l3);
\end{tikzpicture}
\caption{The EUROCRYPT'09 stream cipher (adaptive leakage). $F$ denotes a weak pseudorandom function. By $K_i$ and $x_i$ we denote, respectively, values of the secret state and keystream bits. Leakages are denotted in gray with $L_i$.}
\label{fig:SC_Pietrzak}
\end{figure}
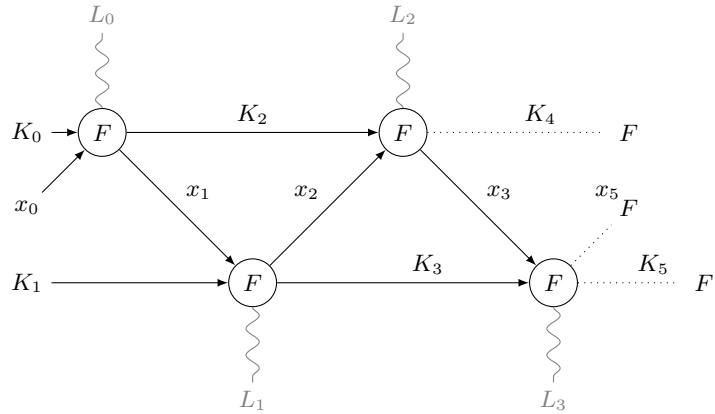

\noindent Jetchev and Pietrzak in \cite{JetchevP14} showed how to use the simulator theorem to simplify the security analysis of the EUROCRYPT'09 cipher. The cipher security depends on the complexity of the simulator as explained in \Cref{thm:Simulator_Pietrzak} and \Cref{remark:exact_loss}. We consider the following setting:
\begin{itemize}
\item number of rounds $q=16$,
\item $F$ instantiated with $\mathsf{AES}256$ (as in \cite{JetchevP14})
\item cipher security we aim for $\epsilon'=2^{-40}$
\item $\lambda = 3$ bits of leakage per round
\end{itemize}
The concrete bounds for $(q,\epsilon',s')$-security of the cipher (which roughly speaking means that $q$ consecutive outputs is $(s',\epsilon')$-pseudorandom, see \Cref{sec:prelim} for a formal definition) are given in \Cref{table:2} below. We ommit calculations as they are merely putting parameters from \Cref{thm:Simulator_Pietrzak}, \Cref{thm:Simulator_Vadhan} and \Cref{thm:Simulator_ours} into \Cref{remark:exact_loss} and assuming that AES as a weak PRF is $(\epsilon,s)$-secure for any pairs $s/\epsilon \approx 2^k$ (following the similar example in \cite{JetchevP14}).
\begin{table}
\renewcommand{\arraystretch}{1.2}
\centering
\begin{tabular}{|l|c|l|l|l|l|l|}
\hline
Analysis/Authors & wPRF security & Leakage & Advantage $\epsilon'$ & Size $s'$ \\ \hline
\cite{JetchevP14} (\Cref{thm:Simulator_Pietrzak}) & \multirow{3}{*}{$256$ } & \multirow{3}{*}{$\lambda=3$ } & \multirow{3}{*}{ $2^{-40}$ } & \textcolor{red}{$0$} \\ \cline{1-1}\cline{5-5}
\cite{VadhanZheng2013} (\Cref{thm:Simulator_Vadhan}) & & &  & \textcolor{red}{$0$} \\ \cline{1-1}\cline{5-5}
\textbf{this paper} (\Cref{thm:Simulator_ours}) & & &  & \textcolor{ForestGreen}{$2^{66}$} \\  \hline
\end{tabular}
\caption{The security of the EUROCRYPT'09 stream cipher, instantiated with AES256 as a weak PRF of rouhgly $k=256$ bits of security. In this settngs only our new bounds provide non-trivial bounds.}\label{table:2}
\end{table}

More generaly, we can give the following comparison of security bounds for different wPRF-based stream ciphers, in terms of time-sccess ratio. The bounds in \Cref{table:1} follow from the simple lemma in
 \Cref{sec:time-success_simpler}, which shows how the time-success ratio changes under explicit reduction formulas.
\begin{table}
\centering
\resizebox{0.95\textwidth}{!}{
\begin{tabular}{|c|l|l|l|c|}
\hline
 Cipher & Analysis & Proof techniques & Security level & Comments  \\
\hline
(1) & \cite{Pietrzak2009} & Pseudoentropy chain rules & $k' \ll \frac{1}{8}k$& large number of blocks \\
\hline
(1) & \cite{JetchevP14} & Aux. Inputs Simulator (corr.) & $k' \approx \frac{k}{6}-\frac{5}{6}\lambda $ & \\
\hline
(1) & \cite{VadhanZheng2013} & Aux. Inputs Simulator & $k' \approx \frac{k}{6}-\frac{1}{3}\lambda $ & \\
\hline
(1) & \textbf{This work} & Aux. Inputs Simulator & $k' \approx \frac{k}{4}-\frac{4}{3}\lambda $ & \\
\hline
(2) & \cite{Faust2012} & Pseudoentropy chain rules & $k' \approx \frac{k}{5} - \frac{3}{5}\lambda$ & large public seed \\
\hline
(3) & \cite{Yu2013} & Square-friendly apps. &  $k' \approx \frac{k}{4}-\frac{3}{4}\lambda$ & only in $\mathsf{minicrypt}$ \\ \hline
\end{tabular}
}
\caption{Different bounds for wPRF-based leakage-resilient stream ciphers. $k$ is the security level of the underlying wPRF. The value $k'$ is the security level for the cipher, understood in terms of time-success ratio. the numbers denote: (1) The EUROCRYPT'09 cipher, (2) The CSS'10/CHESS'12 cipher, (3) The CT-RSA'13 cipher.}
\label{table:1}
\end{table}

\subsection{Organization}

In \Cref{sec:prelim} we discuss basic notions and definitions. The proof of \Cref{thm:Simulator_ours} appears in \Cref{sec:proof:Simulator_ours}. 







\section{Preliminaries}\label{sec:prelim}

\subsection{Basic Notions} 
Let $\mathcal{V}$ be a finite set, and $\cD$ be a class of deterministic real functions on $\mathcal{V}$.
For any two real functions $f_1,f_2$ on $\mathcal{V}$, we say that $f_1,f_2$ are $(\cD,\epsilon)$-indistinguishable if 
\begin{align*}
\forall \D \in \cD:\quad  \left| \E_{x\sim V} \D(x)\cdot f_1(x)-\E_{x\sim V} \D(x)\cdot f_2(x) ) \right| \leqslant \epsilon
\end{align*}
If $\mathcal{D}$ consists of all circuits of size $s$ we say that $f_1,f_2$ are $(s,\epsilon)$-indistinguishable. 

\subsection{Stream ciphers definitions}
We start with the definition of weak pseudorandom functions, which are \emph{computationally indistinguishable} from random functions, when queried on random inputs and fed with uniform secret key.
\begin{definition}[Weak pseudorandom functions]
A function $\F: \{0, 1\}^{k} \times \{0, 1\}^{n} \rightarrow \{0, 1\}^{m}$ is an $(\epsilon, s, q)$-secure weak PRF if its outputs on $q$ random inputs are indistinguishable from random by any distinguisher of size $s$, that is 
\begin{align*}
\left| \Pr \left[\D\left(\left( X_i \right)_{i=1}^{q},\F((K,X_i)_{i=1}^{q} \right)=1\right] - \Pr \left[\D\left(\left(X_i\right)_{i=1}^{q},\left(R_i\right)_{i=1}^{q}\right)=1 \right] \right| \leqslant \epsilon
\end{align*}
where the probability is over the choice of the random $X_i \leftarrow \{0,1\}^n$, the choice of a random key $K \leftarrow \{0,1\}^k$ and $R_i \leftarrow \{ 0,1\}^m$ conditioned on $R_i = R_j$ if $X_i = X_j$ for some $j < i$.
\end{definition}
Stream ciphers generate a keystream in a recursive manner. The security requires the output stream should be indistinguishable from uniform\footnote{We note that in a more standard notion the entire stream $X_1,\ldots,X_{q}$ is indistinguishable from random. This is implied by the notion above by a standard hybrid argument, with a loss of a multiplicative factor of $q$ in the distinguishing advantage.}.
\begin{definition}[Stream ciphers]
A \emph{stream-cipher} $\mathsf{SC} : \{0, 1\}^k \rightarrow \{0, 1\}^k \times \{0, 1\}^n$ is a function that need to be initialized with a secret state $S_0 \in \{0, 1\}^k$ and produces a sequence of output blocks $X_1, X_2, . . . $ computed as
\begin{align*}
 (S_i, X_i) := \mathsf{SC}(S_{i-1}).
\end{align*}
A stream cipher  $\mathsf{SC}$ is $(\epsilon,s,q)$-secure if for all $1 \leqslant i \leqslant q$, the random variable $X_i$ is $(s,\epsilon)$-pseudorandom given $X_1, . . . , X_{i-1}$ (the probability is also over the choice of the initial random key $S_0$).
\end{definition}
Now we define the security of leakage resilient stream ciphers, which follow the ``only computation leaks'' assumption.
\begin{definition}[Leakage-resilient stream ciphers]
A leakage-resilient stream-cipher is $(\epsilon,s,q,\lambda)$-secure if it is $(\epsilon,s,q)$-secure as defined above, but where the distinguisher in the $j$-th round gets $\lambda$ bits of arbitrary deceptively chosen leakage about the secret state accessed during this round. More precisely, before $(S_j,X_j) := \mathsf{SC}(S_{j-1})$ is computed, the distinguisher can choose any leakage function $f_j$ with range $\{0,1\}^{\lambda}$, and then not only get $X_j$, but also $\Lambda_j := f_j(\hat{S}_{j-1})$, where $\hat{S}_{j-1}$ denotes the part of the secret state that was modified (i.e., read and/or overwritten) in the computation $\mathsf{SC}(S_{j-1})$.
\end{definition}

\subsection{Security of leakage-resilient stream ciphers.} Best provable secure constructions of leakage-resilient stream ciphers are based on so called weak PRFs, primitives which look random when queried on random inputs  (\cite{Pietrzak2009,Faust2012,JetchevP14,DodisPietrzak2010,Yu2013}).
The most recent (TCC'14) analysis is based on a version of \Cref{thm:Simulator_Pietrzak}.
\begin{theorem}[Proving Security of Stream Ciphers \cite{JetchevP14}]\label{thm:streamcipher}
If $F$ is a $(\epsilon_F, s_F, 2)$-secure weak PRF then $\mathsf{SC}^F$ is a $(\epsilon', s', q, \lambda)$-secure leakage resilient stream cipher where
\begin{align*}
\epsilon' = 4q \sqrt{\epsilon_F2^{\lambda}},\quad  s' = \Theta(1) \cdot \frac{s_{F} \epsilon'^4}{2^{4\lambda}}.
\end{align*}
\end{theorem}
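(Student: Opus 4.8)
The plan is to adopt the simulator-based strategy of \cite{PietrzakJetchev2014}: a hybrid argument over the $q$ rounds in which, at each step, the $\ell$ bits of leakage are traded for an \emph{efficiently simulated} copy of them by \Cref{thm:Simulator_Pietrzak}, after which the ($2$-query) weak-PRF security of $F$ can be invoked on a leakage-free game. First I would pin down the description of $\mathsf{SC}^{F}$ from \cite{Pietrzak2009}: the secret state carries two interleaved key-chains, so that round $i$ touches a single key $K_{i-1}$, evaluates $F_{K_{i-1}}$ on the relevant (public) inputs to produce the output block $X_i$ together with the key(s) used later, and releases $\ell$ bits of adaptively chosen leakage $\Lambda_i$ about the part of the state accessed in that round; crucially, the alternating structure guarantees that the key which will be used in the next round is \emph{untouched} throughout round $i$, so no leakage piles up on a key before it is used (and the two evaluation points consumed per round are where ``$2$-query'' enters; exact indexing is as in \cite{Pietrzak2009}).

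Next I would introduce hybrids $\mathsf{Hyb}_0,\dots,\mathsf{Hyb}_q$, where in $\mathsf{Hyb}_j$ the $F$-outputs of the first $j$ rounds (next keys and output blocks) are replaced by freshly sampled uniform strings consistent with the public transcript; $\mathsf{Hyb}_0$ is the real leakage game, while in $\mathsf{Hyb}_q$ every challenge block $X_i$ is uniform given the past, so it suffices to bound the distinguishing advantage between consecutive hybrids. For the step $\mathsf{Hyb}_{j-1}\to\mathsf{Hyb}_j$: in $\mathsf{Hyb}_{j-1}$ the active key $K:=K_{j-1}$ is uniform conditioned on the whole view before round $j$ (it was set to a fresh uniform string in round $j-2$ and, by the alternating structure, nothing has touched it since), so the round-$j$ evaluation is distributed exactly as a fresh weak-PRF query pair with a random key. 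Write $\mathsf{out}$ for the pair $(\text{public inputs},\,F_K(\cdot)\text{-outputs of round }j)$ and $\Lambda$ for the round-$j$ leakage, a randomised function of $K$. Applying \Cref{thm:Simulator_Pietrzak} with the roles $X\leftarrow\mathsf{out}$ and $Z\leftarrow\Lambda$ (so $\ell$ plays the role of the simulator output length) yields a simulator $h$ of complexity $s_h=\bigO{s\cdot 2^{3\ell}\epsilon^{-2}}$ with $(\mathsf{out},h(\mathsf{out}))$ and $(\mathsf{out},\Lambda)$ being $(s,\epsilon)$-indistinguishable. A size-$s'$ distinguisher for the hybrid step then becomes a distinguisher of size $s'+s_h$, up to the lower-order cost of running the remaining rounds, against ``$F_K(\cdot)$ versus uniform'': it executes the other rounds honestly and fabricates the round-$j$ leakage as $h(\mathsf{out})$. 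Since $h$ fools \emph{all} circuits of size $s$ (in particular the one induced by the adversary's adaptive leakage choice), replacing $\Lambda$ by $h(\mathsf{out})$ is sound; what remains is the characteristic quadratic loss of alternating-key reductions, namely that advantage $\delta$ against a single hybrid step yields advantage $\gtrsim\delta^{2}/2^{\ell}$ against $F$; this is where the factor $2^{\ell}$ and the square root appear.

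Collecting the estimates, the $q$-fold union/averaging over rounds together with $\delta^{2}/2^{\ell}\leqslant\epsilon_{F}$ gives $\epsilon'=4q\sqrt{2^{\ell}\epsilon_{F}}$ (the constant $4$ absorbing the two evaluation points and the first-versus-second-norm slack), while $s'+s_h\leqslant s_{F}$ with $s_h=\bigO{s'\cdot 2^{3\ell}\epsilon^{-2}}$ and the balancing $\epsilon=\Theta(\epsilon')$ yields $s'=\Theta(1)\cdot s_{F}\,\epsilon'^2\,2^{-3\ell}$, as claimed. I expect the bookkeeping to be the real obstacle: one must verify that the active key is genuinely (pseudo)uniform given the \emph{full adaptive view} (all earlier output blocks and leakages, plus the partially-committed round-$j$ leakage function), that the simulator error and the weak-PRF error compose without inflating the constant past $4$, and that the ``leakage tax'' $2^{\ell}$ is paid only once per round. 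This step is also exactly the place where substituting our \Cref{thm:1} (complexity $\bigO{s\cdot 2^{2\ell}\epsilon^{-2}}$) for \Cref{thm:Simulator_Pietrzak} improves $s'$ by a factor $2^{\ell}$, the gain exploited in the refined analysis of the cipher later in the paper.
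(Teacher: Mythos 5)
\Cref{thm:streamcipher} is not proved in this paper: it is imported verbatim from \cite{Pietrzak2009,PietrzakJetchev2014}, and the only thing the paper adds is \Cref{remark:exact_loss}, which records the structural fact that $s'$ is exactly the simulator complexity from \Cref{thm:Simulator_Pietrzak} evaluated at $(\epsilon',s')$. So there is no ``paper proof'' for your sketch to be compared against; the right question is whether your reconstruction is a plausible account of the cited argument and consistent with that remark.

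On those terms your sketch is broadly on target: the hybrid over rounds, the invariant that the to-be-used key is untouched by leakage in the current round thanks to the alternating two-chain structure, the role of \Cref{thm:Simulator_Pietrzak} in replacing genuine adaptive leakage by a simulated copy, and the final bookkeeping $s'+s_{\mathsf{h}}\leqslant s_F$ with $\epsilon=\Theta(\epsilon')$ giving $s'=\Theta(s_F\,\epsilon'^2 2^{-3\ell})$ are all consistent with \Cref{remark:exact_loss} and with the \cite{PietrzakJetchev2014} approach. Two points you should tighten if you flesh this out. First, the simulator is applied per fixed leakage function: in each hybrid the adversary has already committed to a leakage circuit $g_j$, and one simulates $\Lambda_j=g_j(\text{state})$ given the round's PRF transcript; the phrase ``the circuit induced by the adversary's adaptive leakage choice'' conflates the choice of $Z$ with the choice of distinguisher, which are different quantifiers in \Cref{thm:Simulator_Pietrzak}. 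Second, attributing the square root and the $2^{\ell}$ factor to a ``characteristic quadratic loss of alternating-key reductions'' is hand-wavy: in \cite{Pietrzak2009} this loss comes from the pseudoentropy/dense-model step (after $\ell$ bits of leakage the active key retains roughly $k-\ell$ bits of HILL-type pseudoentropy, and converting pseudoentropy back into pseudorandomness for the next PRF call costs a square), and that is what survives into the simulator-based rewrite of the proof. As stated, your sketch asserts the $\delta^2/2^\ell$ relation without a mechanism, which is exactly the step a reader of \cite{Pietrzak2009} would want spelled out.
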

\begin{remark}[The exact complexity loss]\label{remark:exact_loss}
The inspection of the proof in \cite{JetchevP14} shows that $s_{F}$ equals the complexity of the simulator $h$ in \Cref{thm:Simulator_Pietrzak} applied to the class of all circuits of size $s'$, where $\epsilon$ is replaced by $\epsilon'$.
\end{remark}

\subsection{Time-Success Ratio}\label{sec:time-success}
The running time (circuit size) $s$ and success probability $\epsilon$ of attacks (practical and theoretical) against a particular primitive or protocol may vary. For this reason Luby~\cite{Luby1994} introduced the time-success ratio $\frac{t}{\epsilon}$ as a universal measure of security. This model is widely used to analyze provable security, cf. \cite{Buldas2013} and related works.
\begin{definition}[Security by Time-Success Ratio~\cite{Luby1994}]\label{def:time-success}
A primitive $P$ is said to be $2^{k}$-secure if for \emph{every adversary} with time resources (circuit size in the nonuniform model) $s$, the success probability in breaking $P$ (advantage) is at most $\epsilon < s\cdot 2^{-k}$. We also say that the time-success ratio of $P$ is $2^{k}$, or that is has $k$ bits of security.
\end{definition}
For example, $\mathsf{AES}$ with a $256$-bit random key is believed to have $256$ bits of security as a \emph{weak} PRF\footnote{We consider the security of $\mathsf{AES256}$ as a weak PRF, and not a standard PRF, because of non-uniform attacks which show that 
no PRF with a $k$-bit key can have $s/\epsilon \approx 2 ^k$ security~\cite{DeTT09}, at least unless we additionally require $\epsilon \gg 2^{-k/2}$.}.

\section{Proof of \Cref{thm:Simulator_ours}}\label{sec:proof:Simulator_ours}

For technical convenience, we attempt to efficiently approximate the conditional probability function $\Pr[Z=z|X=x]$ rather than building the sampler directly. Once we end with building an efficient approximation $h(x,z)$, we transform it into a sampler $h_{\mathsf{sim}}$ which outputs $z$ with probability $h(x,z)$  (this transformation yields only a loss of $2^{\ell}$).
We are going to prove the following fact
\begin{quote}
For every function $g$ on $\mathcal{X}\times\mathcal{Z}$ which is a $\mathcal{X}$-conditional probability mass function over $Z$ (that is
$g(x,z)\geqslant 0$ for all $x,z$ and $\sum_{z}g(x,z)=1$ for every $x$), and for every class $\cD$ closed under complements\footnote{This is a standard assumption in indistinguishability proofs. We can always extend the class by adding $-\D$ for every $\D\in\cD$, which increases the complexity only by 1.}
there exists $h$ such that
\begin{enumerate}[(a)]
\item $h$ is a $\mathcal{X}$-conditional probability mass function over $Z$
\item  $h$ is of complexity $s_h = O(2^{4\ell}\epsilon^{-2})$ with respect to $\cD$
\item $(X,Z)$ and $(X, h_{\mathsf{sim}}(X))$ are indistinguishable, which in terms of $g$ and $h$ means
\begin{align}
\left| \sum_{z} \E_{x\sim X} \left[  \D(x,z)\cdot ( g(x,z)-h(x,z) ) \right]\right|\leqslant \epsilon
\end{align} 
\end{enumerate}
\end{quote}
The sketch of the construction is shown in \Cref{alg:simulating}. Here we would like to point out two things. First,
we stress that we do not produce a strictly positive function; what our algorithm guarantees, is that the total negative mass is\emph{small}. We will see later that this is enough. Second, our algorithm performs essentially same operations for every $x$, which is why its complexity depends only on $\mathcal{Z}$.

For simplicity (and without losing generality) we assume $\mathcal{X}=\{0,1\}^n$ and $\mathcal{Z}=\{0,1\}^{\ell}$. We also denote for shortness $\overline{\D}(x,z)=\D(x,z)-\E_{z'\leftarrow U_{\mathcal{Z}}}\D(x,z')$ for any $\D$ (the "shift" transformation)

\IncMargin{1em}
\begin{algorithm}[ht!]
\SetKwInOut{Input}{input}\SetKwInOut{Output}{output}
\Input{Function $g: \nstrings\times\mstrings\rightarrow [0,1]$, accuracy paramter $\epsilon > 0$, class $\mathcal{D}$, step $\gamma$}
\Output{Function $h$ which is $\epsilon$-indistinguishable from $g$ under $\cD$, add up to 1 for every $x$,
and with total negative mass smaller $\gamma|\mathcal{Z}|^3$ }
$t \leftarrow 0$ \newline
$ h^{0}(x,z) \leftarrow \frac{1}{|\mathcal{Z}|} $ for every $x$ and $z$ \newline
\While(\tcc*[f]{while the simulator is not good enough}){ exists $\D\in\cD$ such that $\nllabel{condition} \E_{x\sim X} \left[ \sum_{z} \overline{\D}(x,z)\cdot \left(g(x,z') - h^{t}(x,z') \right) \right] \geqslant \epsilon $ }{
$\D^{t+1} \leftarrow \D$ \newline
\For(\tcc*[f]{improve the simulator towards the distinguisher direction}){$z'\in\mathcal{Z}$}{
$h^{t+1}(x,z')\leftarrow h^{t}(x,z')+\gamma\cdot\overline{\D^{t+1}}(x,z')  $} 
$t\leftarrow t+1$ \newline
$m\leftarrow 0$ \newline
\For( \tcc*[f]{locate the biggest negative point mass}){$z'\in\mathcal{Z}$}{
\If{${h}^{t}(x,z') < m$}{
$m\leftarrow \tilde{h}^{t}(x,z')$\newline
$z^{-} \leftarrow z'$
}
}
${h}^{t}(x,z^{-})=0$ \tcc*[f]{cut the biggest negative mass}
\nllabel{loop:fixing_positivity}\For{$z'\in\mathcal{Z}$}{
${h}^{t}(x,z') \leftarrow h^{t}(x,z')+\frac{m}{|\mathcal{Z}|-1}$\tcc*[f]{redestribute the cut mass}
}
}
\Return ${h}^{t}(x,z)$
\caption{Construct a Simulator}\label{alg:simulating}
\end{algorithm}
\DecMargin{1em}
\newpage
\begin{proof}
Consider the functions ${h}^{t}$. Define $\tilde{h}^{t+1}(x,z) \overset{def}{=} h^{t}(x,z)+\overline{\D}^{t+1}(x,z)$.
According to \Cref{alg:simulating}, 
we have
\begin{align}\label{eq:corrected_sim_recursive}
{h}^{t+1}(x,z) = {h}^{t}(x,z) + \gamma\cdot\overline{\D}^{t+1}(x,z) + \theta^{t+1}(x,z)
\end{align}
with the correction term $\theta^{t,r+1}(x,z)$ that be computed recursively as (see \Cref{loop:fixing_positivity} in \Cref{alg:simulating})
\begin{align}\label{eq:corrected_sim_correction_term}
\begin{array}{rl}
\theta^{t,0}(x,z) & = 0 \\
\theta^{t,r+1}(x,z) & = \left\{   
\begin{array}{rl}
-\min\left(h^{t}(x,z) + \gamma\cdot\overline{\D}^{t+1}(x,z),0\right), & \text{ if }  z= z_{\text{min}}^{t}(x)\\
 \frac{\min\left(h^{t}(x,z_{\text{min}}^{t}(x)))+\overline{\D}^{t+1}(x,z_{\text{min}}^{t}(x)),0\right)}{\#\mathcal{Z}-1} & \text{ if }z \not= z_{\text{min}}^{t}(x)
\end{array}
\right. \quad t=0,1,\ldots
\end{array}
\end{align}
where 
$z_{\text{min}}^{t}(x)$ is one of the points $z$ minimizing $ h^{t}(x,z)+\overline{\D}^{t+1}(x,z) $.
In particular
\begin{align}\label{eq:there_is_negative_mass}
 h^{t}(x,z_{\text{min}}^{t}(x)))+\overline{\D}^{t+1}(x,z_{\text{min}}^{t}(x)) < 0 \Longleftrightarrow \exists z: \ h^{t}(x,z)+\overline{\D}^{t+1}(x,z)<0
\end{align}
\underline{Notation}: for notational convenience we indenify the functions $\overline{D}^{t}(x,z)$, $\theta^{t}(x,z)$, $\tilde{h}^{t}(x,z)$ and $h^{t}(x,z)$ with matrices where $x$ are columns and $z$ are rows.
\begin{claim}[Complextity of \Cref{alg:simulating}]
$T$ executions of the ``while loop'' can be realized with time $O\left(T\cdot |\mathcal{Z}| \cdot \mathrm{size}(\cD)\right)$ and memory $O(|\mathcal{Z}|)$. 
\footnote{The RAM model}.
\end{claim}
This claim describes precisely resources required to compute the function $h^{T}$ for every $T$. In order to bound $T$, we define the energy function as follows:
\begin{claim}[Energy function]\label{claim:energy} Define the auxiliary function
\begin{align}\label{eq:energy}
\Delta^{t} = \sum_{i=0}^{t-1} \E_{x\sim X} \left[  \overline{\D}^{i+1}_x\cdot \left( g_x-h^{i}_x\right) \right].
\end{align}
Then we have $
\Delta^{t}  = E_1 + E_2
$
where
\begin{align}\label{eq:energy_split}
\begin{array}{rl}
E_1 &=\frac{1}{\gamma}\E_{x\sim X}\left[ \left( h^{t}_x-h^{0}_x\right)\cdot g_x  + \frac{1}{2}\sum_{i=0}^{t-1}\left(h^{i+1}_x  - h^{i}_x\right)^2-\frac{1}{2}\left( \left(h^{t}_x\right)^2-\left(h^{0}_x\right)^2 \right)\right] \\
E_2 & =  \frac{1}{\gamma}\E_{x\sim X}\left[ -\sum_{i=0}^{t-1} \theta^{i+1}_x\cdot \left( g_x-h^{i+1}_x\right) - \sum_{i=0}^{t-1} \theta^{i+1}_x\cdot \left( h^{i+1}_x-h^{i}_x\right) \right]
\end{array}
\end{align}
The proof is based on simple algebraic manipulations and appears in \Cref{proof:claim:energy}. 
\begin{remark}[Technical issues and intuitions]
From \Cref{eq:energy_split} it is clear that we need two important properties
\begin{enumerate}[(a)]
\item \emph{Boundedness of correction terms}, that is ideally $|\theta^{i}(x.z)| = O(\mathrm{poly}(|\mathcal{Z}|)\cdot\gamma)$.
\item \emph{Acute angle between the correction and the error}, that is $\theta^{i}_x\cdot (g_x-h^{i}_x) \geqslant 0$.
\end{enumerate}
Below we present an outline of the proof, discussing more technical parts in the appendix.
\subsubsection{Proof outline.}
Indeed, with these assumptions we can prove that 
\begin{align*}
E_1 + E_2 \leqslant O\left(\mathrm{poly}( |\mathcal{Z}|)\cdot \left( t \gamma + \gamma^{-1}\right) \right).
\end{align*}
Since in the other hand we have $t\epsilon\leqslant \Delta^{t} $, setting $\gamma = \epsilon/\mathrm{poly}(|\mathcal{Z})$ we get that the algorithm terminates after at most $T = \mathrm{poly}(|\mathcal{Z}|)\epsilon^{-2}$ steps. We stress that it outputs only a \emph{signed measure}, not a probability distribution. However, because of property (a) the negative mass is only of order $\mathrm{poly}(|\mathcal{Z}|)\epsilon$ and the function we end with can be simply rescaled (we replace negative masses by 0 and normalize the function dividing by a factor $1-m$ where $m$ is the total negative mass).
With this transformation, we replace the expected advantage $O(\epsilon)$ by slightly worse $O\left( \mathrm{poly}(|\mathcal{Z}|)\epsilon\right)$. We can then replace $\epsilon$ to get a clear dependency. Finally, we need to remember that we construct only a probability distribution function, not a sampler. Transforming it into a sampler yields an overhead of $O(\mathcal{Z})$. This discussion shows that it is possible to build a sampler of complexity $\mathrm{poly}(|\mathcal{Z}|)\epsilon^{-2}$. A more carefull inspection of the proof shows that we can actually achieve $|\mathcal{Z}|^5\epsilon^{-2}$.

\subsubsection{Technical Discussion}
We note that condition (b) somehow means that mass cuts should go in the right direction, as it is much simpler to prove that \Cref{alg:simulating} terminates when there are no correction terms $\theta^{t}$; thus we don't want to go in a wrong direction and ruin the energy gain. Concrete bounds on properties (a) and (b) are given in \Cref{claim:negative_mass_small,claim:increments_angle}.
\end{remark}
\end{claim}

In \Cref{alg:simulating} in every round we shift only one negative point mass (see \Cref{loop:fixing_positivity}). However, since this point mass is chosen to be as big as possible and since $h^{t+1}$ and $h^{t}$ differ only by a small term $\gamma\cdot\overline{\D}^{t+1}$ except the mass shift $\theta^{t+1}$, one can expect that we have the negative mass under control. Indeed, this is stated precisely in \Cref{claim:negative_mass_small} below.
\begin{claim}[The total negative mass is small]\label{claim:negative_mass_small}
Let 
\begin{align*}
\NM(h^{t}(x,\cdot)) = -\sum_{z} \min( h^{t}(x,z) ,0) 
\end{align*}
 be the total negative mass in $h^{t}(x,z)$ as the function of $z$. Then we have
\begin{align}\label{eq:negative_mass_bounded}
 \NM(h^{t}(x,\cdot) < |\mathcal{Z}|^3\gamma.
\end{align}
for every $x$ and every $t$.
\end{claim}
The proof is based on a recurrence relation that links $\NM(h^{t+1}(x,\cdot)$ with $\NM(h^{t}(x,\cdot)$, and appears in \Cref{proof:claim:negative_mass_small}.
\begin{claim}[The angle formed by the correction and the difference vector is acute] \label{claim:increments_angle}
For every $x,t$ we have $\textsf{Angle}\left(\theta^{t+1}_x,g_x-{h}^{t+1}_x\right)\in \left[-\frac{\pi}{2},\frac{\pi}{2}\right]$. 
\end{claim}
The proof appears in \Cref{proof:claim:increments_angle}.

\section{Time-success ratio under algebraic transformations}\label{sec:time-success_simpler}

In \Cref{thm:reduction_simpler} below we provide a quantitative analysis of how the time-success ratio changes under concrete formulas in security reductions.
\begin{lemma}[Time-success ratio for algebraic transformations]\label{thm:reduction_simpler}
Let $a,b,c$ and $A,B,C$ be positive constants. Suppose that $P'$ is secure against adversaries $(s',\epsilon')$, whenever $P$ is secure against adversaries $(s,\epsilon)$, where
\begin{align}\label{eq:reduction_simpler}
\begin{array}{rl}
 s' & = s\cdot c\epsilon^{C} - b\epsilon^{-B} \\ 
 \epsilon' & = a\epsilon^A.
\end{array}
\end{align}
In addition, suppose that the following condition is satisfied
\begin{align}\label{eq:small_exponent}
 A \leqslant C+1.
\end{align}
Then the following is true: if $P$ is $2^{k}$-secure, then $P'$ is $2^{k'}$-secure (in the sense of \Cref{def:time-success}) where
\begin{align}
k' = \left\{
\begin{array}{rl}
\frac{A}{B+C+1} k + \frac{A}{B+C+1}(\log c - \log b)-\log a, & \quad b\geqslant 1 \\
\frac{A}{C+1} k + \frac{A}{C+1}\log c -\log a, & \quad b = 0
\end{array}
\right.
\end{align}
\end{lemma}
The proof is elementary though not immediate. It can be found in \cite{DBLP:journals/corr/Skorski15b}.
\begin{remark}[On the technical condition \eqref{eq:small_exponent}]
This condition is satisfied in almost all applications, at in the reduction proof typically $\epsilon'$ cannot be better (meaning higher exponent) than $\epsilon$. Thus, quite often we have $A\leqslant 1$.
\end{remark}

\end{proof}

\bibliographystyle{amsalpha}
\bibliography{citations}

\appendix

\section{More on the flaw in \cite{JetchevP14}}\label{sec:flaws}

In the original setting we have $\mathcal{Z} = \{0,1\}^{\lambda}$. In the proof of the claimed better bound $\bigO{s\cdot 2^{3\lambda}\epsilon^{-2}}$ there is a mistake on page 18 (eprint version), when the authors enforce a signed measure to be a probability measure by a mass shifting argument. The number $M$ defined there is in fact a function of $x$ and is hard to compute, whereas the original proof amuses that this is a constant independent of $x$. During iterations of the boosting loop, this number is used to modify distinguishers class step by step, which drastically blows up the complexity (exponentially in the number of steps, which is already polynomial in $\epsilon$). In the min-max based proof giving the bound $\bigO{s\cdot 2^{3\lambda}\epsilon^{-4}}$ a fixable flaw is a missing factor of $2^{\lambda}$ in the complexity (page 16 in the eprint version), which is because what is constructed in the proof is only a probability mass function, not yet a sampler~\cite{Pietrzak15_private}.

\section{Proof of \Cref{claim:energy}}\label{proof:claim:energy}

We can rewrite \Cref{eq:energy} as
\begin{align}\label{eq:energy2}
\Delta^{t} & = \frac{1}{\gamma}\E_{x\sim X}  \left[\sum_{i=0}^{t-1}  \left( \left( h^{i+1}_x-h^{i}_x\right)- \theta^{i+1}_x \right)\cdot \left( g_x-h^{i}_x\right) \right] \nonumber \\
& = \frac{1}{\gamma}\E_{x\sim X} \left[ \sum_{i=0}^{t-1}  \left( h^{i+1}_x-h^{i}_x\right)\cdot \left( g_x-h^{i}_x\right) -  \sum_{i=0}^{t-1} \theta^{i+1}_x\cdot \left( g_x-h^{i}_x\right) \right]
\end{align}
First, note that
\begin{align}\label{eq:energy_bound_1}
 \sum_{i=0}^{t-1}  \left( h^{i+1}_x-h^{i}_x\right)\cdot & \left( g_x-h^{i}_x\right) = \nonumber \\
 =&  \left( h^{t}_x-h^{0}_x\right)\cdot g_x - \sum_{i=0}^{t-1}h^{i}_x\cdot\left( h^{i+1}_x-h^{i}_x\right) \nonumber \\
=&  \left( h^{t}_x-h^{0}_x\right)\cdot g_x + \frac{1}{2} \sum_{i=0}^{t-1}\left(h^{i+1}_x-h^{i}_x\right)\cdot\left( h^{i+1}_x-h^{i}_x\right) + \nonumber \\
& \quad -  \frac{1}{2} \sum_{i=0}^{t-1}\left(h^{i+1}_x+h^{i}_x\right)\cdot\left( h^{i+1}_x-h^{i}_x\right) \nonumber \\
=&  \left( h^{t}_x-h^{0}_x\right)\cdot g_x  + \frac{1}{2}\sum_{i=0}^{t-1}\left(h^{i+1}_x  - h^{i}_x\right)^2-\frac{1}{2}\left( \left(h^{t}_x\right)^2-\left(h^{0}_x\right)^2 \right)
\end{align}
As to the second term in \Cref{eq:energy2}, we observe that
\begin{align}\label{eq:energy_bound_2}
-\sum_{i=0}^{t-1} \theta^{i+1}_x\cdot \left( g_x-h^{i}_x\right)  = -\sum_{i=0}^{t-1} \theta^{i+1}_x\cdot \left( g_x-h^{i+1}_x\right) - \sum_{i=0}^{t-1} \theta^{i+1}_x\cdot \left( h^{i+1}_x-h^{i}_x\right)
\end{align}

\section{Proof of \Cref{claim:negative_mass_small}}\label{proof:claim:negative_mass_small}

\begin{proof}[Proof of \Cref{claim:negative_mass_small}]
We start by comparing the total negative mass in the functions $h^{t+1}=h^{t}+\overline{\D}^{t+1}+\theta^{t+1}$ and ${h}^{t} $. Suppose first that $\tilde{h}^{t}(x,z_0) < 0$ where $z_0 =  z_{\text{min}}^{t}(x)$. Since $\sum_{z\not=z_0} \tilde{h}^{t+1}= 1-\tilde{h}^{t+1}(x,z_0)$, there exists $z_1$ such that $
 \tilde{h}^{t+1}(x,z_1) \geqslant \frac{1-\tilde{h}^{t+1}(x,z_0)}{|\mathcal{Z}| - 1} > 0
$.
Combining this with \Cref{eq:corrected_sim_correction_term} we obtain
\begin{align}\label{eq:positive_mass_big_decrease}
h^{t+1}(x,z_1) &=\tilde{h}^{t+1}(x,z_1) +\frac{\tilde{h}^{t+1}(x,z_0)}{|\mathcal{Z}| - 1} \geqslant \frac{1}{|\mathcal{Z}| - 1}
\end{align}
By \Cref{eq:most_negative_mass} we have
\begin{align}\label{eq:shifting_decreases_negative_mass}
 \sum_{z\in\mathcal{Z}} & \min\left( {h}^{t+1}(x,z),0 \right) =\sum_{z\in\mathcal{Z}}\min\left( \tilde{h}^{t+1}(x,z)+\theta^{t+1}(x,z),0 \right) \nonumber \\
& =\sum_{z\in\mathcal{Z}\setminus\{z_0,z_1\}} \min\left( \tilde{h}^{t+1}(x,z)+\frac{\tilde{h}^{t+1}(x,z_0)}{|\mathcal{Z}|-1 },0 \right) + \nonumber \\
& \quad + \min\left( \tilde{h}^{t+1}(x,z_1)+\frac{\tilde{h}^{t+1}(x,z_0)}{|\mathcal{Z}|-1},0\right) \nonumber \\
& \geqslant \sum_{z\leftarrow {\mathcal{Z}\setminus\{z_0,z_1\}}}\left( \min(\tilde{h}^{t+1}(x,z),0)+\frac{\tilde{h}^{t+1}(x,z_0)}{|\mathcal{Z}|-1 } \right) \nonumber \\
& \quad+ \min(\tilde{h}^{t+1}(x,z_0),0)-\tilde{h}^{t+1}(x,z_0) +\min\left( \tilde{h}^{t+1}(x,z_1),0\right) \nonumber \\
& = \sum_{z\in\mathcal{Z}}\min(\tilde{h}^{t+1}(x,z),0)-\frac{\tilde{h}^{t+1}(x,z_0)}{|\mathcal{Z}|-1} 
\end{align}
where the inequality line follows from $\tilde{h}^{t+1}(x,z_0) < 0$ and \Cref{eq:positive_mass_big_decrease}. But by the definition of $z_{\text{min}}^{t}(x)$ in \Cref{eq:most_negative_mass} we get
\begin{align}\label{eq:most_negative_mass_relative_to_total}
\tilde{h}^{t+1}(x,z_0) \leqslant \frac{1}{|\mathcal{Z}|-1}\cdot \sum_{z\in\mathcal{Z}}\min\left( \tilde{h}^{t+1}(x,z),0\right)
\end{align}
Combining \Cref{eq:shifting_decreases_negative_mass} and \Cref{eq:most_negative_mass_relative_to_total} we obtain
\begin{align}\label{eq:before_vs_after_shifting_1}
-\sum_{z\in\mathcal{Z}} & \min\left( {h}^{t+1}(x,z),0 \right) \leqslant -\left(1-\frac{1}{(|\mathcal{Z}|-1)^2}\right)\sum_{z\in\mathcal{Z}} \min\left( \tilde{h}^{t+1}(x,z),0 \right).
\end{align}
Since $| h^{t+1}(x,z)-\tilde{h}^{t}(x,z)| \leqslant \gamma$ by \Cref{eq:corrected_sim_recursive}, we get the following recursion
\begin{align}\label{eq:before_vs_after_shifting_2}
-\sum_{z\in\mathcal{Z}} \min\left( {h}^{t+1}(x,z),0 \right) \leqslant -\left(1-\frac{1}{(|\mathcal{Z}|-1)^2}\right)\sum_{z\in\mathcal{Z}} \min\left( {h}^{t}(x,z),0 \right) + |\mathcal{Z}|\gamma
\end{align}
which can be rewritten as
\begin{align}\label{eq:before_vs_after_shifting_2}
\NM\left( {h}^{t+1}(x,\cdot) \right) < \left(1-\frac{1}{| \mathcal{Z}|^2}\right)\NM\left(h^{t}(x,\cdot)\right) + |\mathcal{Z}|\gamma.
\end{align}
which is in addition trivially true if $\tilde{h}^{t+1}(x,z) \geqslant 0$ for all $z$. The result follows by expanding this recursion till $t=0$.
\end{proof}

\section{Proof of \Cref{claim:increments_angle}}\label{proof:claim:increments_angle}

\begin{proof}
If $\theta^{t+1}(x,z) = 0$ then there is nothing to prove. Suppose that $\theta^{t+1}(x,z) < 0$. 
Let $z_0=z_{\text{min}}^{t}(x)$. According to \Cref{eq:corrected_sim_correction_term} we have $\theta^{t+1}(x,z_0) = -\tilde{h}^{t+1}(x,z_0)$ and $\theta^{t+1}(x,z) = \frac{\tilde{h}^{t+1}(x,z_0)}{\#\mathcal{Z}-1}$ for $z\not=z_0$. Therefore
\begin{align}\label{eq:correction_vs_difference_1}
\theta^{t+1}_x\cdot\left(g_x-\tilde{h}^{t+1}_x\right)  & = -\tilde{h}^{t+1}(x,z_0)\left(g(x,z_0)-\tilde{h}^{t+1}(x,z_0)\right) +\nonumber \\
&\quad+ \sum_{z\not=z_0}\frac{\tilde{h}^{t+1}(x,z_0)}{|\mathcal{Z}|-1}\cdot\left(g(x,z)-\tilde{h}^{t+1}(x,z)\right) \nonumber \\
& = -\tilde{h}^{t+1}(x,z_0)\left(g(x,z_0)-\tilde{h}^{t+1}(x,z_0)\right)  \nonumber \\
&\quad -\frac{\tilde{h}^{t+1}(x,z_0)}{|\mathcal{Z}|-1}\left(g(x,z_0)-\tilde{h}^{t+1}(x,z_0)\right) 
\end{align}
and 
\begin{align}\label{eq:correction_vs_difference_2}
-\theta^{t+1}_x\cdot\theta^{t+1}_x = -\tilde{h}^{t+1}(x,z_0)\cdot \tilde{h}^{t+1}(x,z_0)\left(1 + \frac{1}{|\mathcal{Z}-1|}\right).
\end{align}
Putting \Cref{eq:correction_vs_difference_1,eq:correction_vs_difference_2} together we obtain
\begin{align*}
\theta^{t+1}_x\cdot\left(g_x-h^{t+1}_x\right) & = \theta^{t+1}_x\cdot\left(g_x-\tilde{h}^{t+1}_x\right) -\theta^{t+1}_x\cdot\theta^{t+1}_x  \\
& = -\left(1+\frac{1}{|\mathcal{Z}|-1}\right)\tilde{h}^{t+1}(x,z_0)\cdot g(x,z_0)
\end{align*}
which is positive because $\tilde{h}^{t,r}(x,z_0)<0$ and $g(x,z_0) \geqslant 0$. This proves \Cref{claim:increments_angle}. 
\end{proof}

\end{document}